\theoremstyle{plain}
\newtheorem{thm}{Theorem}
\newtheorem{cor}{Corollary}
\newtheorem{lem}{Lemma}
\theoremstyle{definition}
\newtheorem{mydef}{Definition}
\newcommand{\beqn}{\begin{eqnarray}\begin{aligned}}
\newcommand{\eqn}{\end{aligned}\end{eqnarray}}
\newcounter{mycnt}
\def\ov{\overline}
\def\1{\ov{1}}
\def\2{\ov{2}}
\def\3{\ov{3}}
\definecolor{mygreen}{rgb}{0.2656,0.5039,0.2148}
\definecolor{myred}{rgb}{0.75,0,0.25}
\begin{document}

\title[Phylogenetic dimensional reduction ]
{Dimensional reduction for the general Markov model on phylogenetic trees} 


%
\author{Jeremy G Sumner}
\address{J G Sumner,  %
School of Physical Sciences, Mathematics, University of Tasmania, 
Private Bag 37, GPO, Hobart Tas 7001, Australia}
\email{Jeremy.Sumner@utas.edu.au}
%


\keywords{}

\date{\today}

\begin{abstract}
We present a method of dimensional reduction for the general Markov model of sequence evolution on a phylogenetic tree.
We show that taking certain linear combinations of the associated random variables (site pattern counts) reduces the dimensionality of the model from exponential in the number of extant taxa, to quadratic in the number of taxa, while retaining the ability to statistically identify phylogenetic divergence events.
A key feature is the identification of an invariant subspace which depends only bilinearly on the model parameters, in contrast to the usual multi-linear dependence in the full space.
We discuss potential applications including the computation of split (edge) weights on phylogenetic trees from observed sequence data. 
\end{abstract}
\maketitle

\section{Introduction}

Phylogenetics is the suite of mathematical and computational methods available to biologists for the inference of the evolutionary history of extant species.
Typical data inputs to these methods are molecular sequences in various forms, including DNA, sequences of amino acids, codons, and proteins (usually first aligned and then summarised into site pattern counts), and even gene orderings at the scale of whole genomes. 
Underlying many modern methods (particularly likelihood and Bayesian approaches) is a stochastic model of molecular state evolution; most typically constructed as a Markov process on candidate evolutionary trees.
The reader is referred to the texts \cite{felsenstein2004inferring,semple2003phylogenetics,yang2014molecular} for excellent backgrounds on the mathematical, statistical, and computational aspects of phylogenetics.

Over the last two decades or so there has been much mathematical progress emphasizing natural \emph{algebraic} structures that underlie phylogenetic models.
These range from algebraic geometry arising from model parameterization maps \cite{allman2008phylogenetic,sturmfels2005toric}, to the combinatorial group theory inherent in genome rearrangement models \cite{francis2014algebraic}, to methods of discrete Fourier transforms \cite{hendy1994discrete}.  
Concurrently, our research has explored the role of (matrix) Lie groups \cite{sumner2012lie} together with their associated representation theory and invariant functions \cite{jarvis2015matrix,sumner2008markov,sumner2005entanglement}, in particular defining what we refer to as ``Markov invariants''\footnote{Not to be confused with ``phylogenetic invariants''; the distinction will be discussed below.} (we provide a recent review in \cite{jarvis2014adventures}). 

In their most abstract setting, Markov invariants provide the means to reduce the high dimensionality of phylogenetic models (determined by the number of observable site patterns) from exponential in the number of taxa to one-dimensional subspaces spanned by individual polynomials (from the statistical point of view, these are functions on the random variables given by site pattern counts in molecular sequence alignments).
Unfortunately, due to algebraic combinatorial explosion, it is only feasible to explore these polynomials explicitly for small cases, and this has only been achieved so far for four taxa --- the so-called ``quartet'' case \cite{holland2013low,sumner2009markov}.

Even though the mathematical properties of Markov invariants are rather elegant, and in themselves quartets are of fundamental importance to phylogenetics, the practical utility of this approach becomes questionable when biologists are interested in the inference of large phylogenetic trees with many, possibly thousands, of taxa.
For these and other reasons, it is perhaps not surprising that the emergence of algebraic approaches has not been accompanied by an upsurge in usage by practicing biologists.

In this paper, we will explore a generalization which compromises on the dimensional reduction provided by Markov invariants slightly: we reduce the dimensionality of the model from exponential in the number of taxa to \emph{quadratic} in the number of taxa.
This reduction comes with the significant benefit that the method we describe is generally applicable to any number of taxa using standard computational techniques currently available in most programming languages (linear transformations and singular value decomposition SVD), with no further theoretical work required.
The approach hinges on a simple algebraic observation regarding the embedding of Markov matrices into the matrix group of linear invertible affine transformations, and the identification of a particular invariant subspace after distributing over Kronecker (or tensor) products.

The paper proceeds as follows.
In \S\ref{sec:background} we give the required background on the algebraic structures underlying phylogenetic models.
Specifically, we discuss the so-called ``flattenings'' and associated rank properties.
In \S\ref{sec:dimreduct} we present our main result of dimensional reduction for phylogenetic models.
In \S\ref{sec:discuss} we discuss potential strategies for applying the dimensional reduction in the practical setting of phylogenetic tree inference.

\section{Background}
\label{sec:background}

We assume we have $L$ biological taxa labelled as $1,2,\ldots,L$ together with homologous (aligned) molecular states drawn from a finite state space $\kappa$.
For example, we have $|\kappa|\!=\!4$ for DNA and $|\kappa|\!=\!20$ for amino acids.
Since our results are independent of the size of the state space, we will generically write $|\kappa|\!=\!k$.
To produce a phylogenetic model, we fix (i) a (rooted) binary tree $\mathcal{T}$ with $L$ leaves (degree 1 vertices), (ii) an initial probability distribution $(\pi_i)_{i\in \kappa}$, and (iii) Markov matrices  $M_{e}$ for each edge  $e\in \mathcal{T}$. (Throughout this paper, we will follow the convention that Markov matrices have unit-\emph{column} sums, as opposed to unit-row sums.)
These inputs then produce a probability distribution of site patterns $P=(p_{i_1i_2\ldots i_L})_{i_j\in\kappa}$ at the leaves of the tree and we refer to any such distribution as having ``arisen'' under the general Markov model on the tree $\mathcal{T}$.
This process is illustrated in Figure~\ref{fig:7leaf}.

\begin{figure}[htb]
\caption{Graphical illustration of the construction of a site pattern distribution $P=(p_{i_1i_2i_3i_4i_5i_6i_7})_{i_j\in\kappa}$ on a phylogenetic tree $\mathcal{T}$ with seven taxa. Input data is a set of Markov matrices $\{M_{e}\}_{e\in \mathcal{T}}$ and an initial distribution $(\pi_i)_{i\in \kappa}$.}
\vspace{1em}
    \centering
    \pstree[levelsep=9ex,treesep=1.6cm,nodesep=.8pt]{\Tcircle{$\pi$}}{%
        \pstree{\TC*[radius=3pt] \tlput{\small$M_{11}$}}{
            \skiplevel{\TR{$i_1$} \tlput{\small$M_{1}$}}
            \Tn
            \pstree{\TC*[radius=3pt]\trput{\small$M_{8}$}}{
                \TR{$i_2$} \tlput{\small$M_{2}$}
                \TR{$i_3$} \trput{\small$M_{3}$}
            }
        }
        \Tn \Tn \Tn \Tn
        \pstree{\TC*[radius=3pt] \trput{\small$M_{12}$}}{
            \pstree{\TC*[radius=3pt] \tlput{\small$M_{9}$}}{
                \TR{$i_4$} \tlput{\small$M_{4}$}
                \TR{$i_5$} \trput{\small$M_{5}$}
            }
            \Tn
            \pstree{\TC*[radius=3pt] \trput{\small$M_{10}$}}{
                \TR{$i_6$} \tlput{\small$M_{6}$}
                \TR{$i_7$} \trput{\small$M_{7}$}
            }
        }
    }
    \label{fig:7leaf}
\end{figure}

Even though such a model specifies a location for the root of the tree, it is generally the case that the choice of root either does not affect the output distribution in any way \cite{felsenstein1981evolutionary}, or for each alternative root position there is a corresponding alternative choice of model parameters which produce the same distribution. 
Thus, without loss of generality, we may remove the root vertex from $\mathcal{T}$ and hitherto consider $\mathcal{T}$ to be a binary tree with all vertices of degree 3 or 1.
For simplicity we will refer to both the vertices of degree 1 \emph{and} their adjacent edges as ``leaves'' (context will resolve any ambiguity).


Consider an $m|n$ ``split'' $A|B$ of the taxon set $\{1,2,\ldots,L\}$.
This is defined as a bipartition $\{A,B\}$ with $A,B\subseteq\{1,2,\ldots,L\}$, $A\cap B=\emptyset$, $A\cup B=\{1,2,\ldots,L\}$, $|A|\!=\!m$, and $|B|\!=\!n$, so $m+n\!=\!L$.
Suppose we fix an (unrooted) binary tree $\mathcal{T}$ with leaf set $\{1,2,\ldots,L\}$.
Then each edge in $\mathcal{T}$ gives rise to a split $A|B$ via deletion of the edge and the resulting partitioning of the leaf vertices.
Correspondingly, for each split $A|B$ we may define a so-called ``flattening'' of the array $P=(p_{i_1i_2\ldots i_L})_{i_j\in\kappa}$, as follows\footnote{Illustrative examples of flattenings are given in the introduction of \cite{allman2008phylogenetic}.}.
Supposing $A\!=\!\{s_1,s_2,\ldots,s_m\}$ and $B\!=\!\{t_1,t_2,\ldots,t_n\}$, the flattening $\text{Flat}_{A|B}(P)$ is the $k^m\times k^n$ matrix with rows indexed by the strings $i_{s_1}i_{s_2}\ldots i_{s_m}$, columns indexed by strings $i_{t_1}i_{t_2}\ldots i_{t_n}$, and entries $p_{i_1i_2\ldots i_L}$.
Although the row and column indices may be ordered in any way we choose without affecting the present discussion, it is convenient to take the ordering which is consistent with the tensor/Kronecker products discussed in the next section.

This construction gives rise to the so-called ``edge invariants'' since it can be shown that
\emph{the flattening $\text{Flat}_{A|B}(P)$ has matrix rank at most $k$ if $A|B$ is a split in the tree and at least $k^2$ otherwise}. 
(These facts are established in \cite{allman2008phylogenetic,eriksson2005tree} for ``generic'' cases --- weak conditions on model parameters need to be assumed.)
More generally, \cite{eriksson2005tree} gives an algorithm to find $r\in\mathbb{N}$ such that the rank of the flattening is at most $k^r$ and, up to this bound, maximal rank is attained in generic cases.

The edge invariants are defined as the $(k+1)\!\times\!(k+1)$ minors of the flattening $\text{Flat}_{A|B}(P)$, which are seen to vanish if the edge corresponding to $A|B$ occurs in $\mathcal{T}$.
These minors form polynomials in the variables $(p_{i_1i_2\ldots i_L})_{i_j\in \kappa}$ and are specific cases of what are known as ``phylogenetic invariants'' in the literature \cite{cavender1987invariants,lake1987rate}, which are defined as polynomials that vanish on probability distributions arising from certain subsets of phylogenetic trees (see \cite{allman2008phylogenetic,casanellas2011relevant,draisma2009ideals} for more recent developments).
For our purposes, it is important to distinguish ``phylogenetic'' from ``Markov'' invariants; these are formally distinct concepts with instances that sometimes, but not always, coincide.

The flattenings and/or edge invariants may be used as a simple test for the occurrence of the edge corresponding to $A|B$ in the tree $\mathcal{T}$ as follows.
\begin{enumerate}
\item Collect the observed molecular sequences into an alignment (using standard, freely available software) to produce a site pattern count array $F=(f_{i_1i_2\ldots i_L})_{i_j\in\kappa}$, where $f_{i_1i_2\ldots i_L}$ is the number of sites which have state pattern $i_1i_2\ldots i_L$;
\item  Suppose that $F$ is a sample $F\sim \text{multinomial}(P,N)$ where $P$ is some pattern distribution arising under the general Markov model of sequence evolution on an (unknown) tree $\mathcal{T}$ and $N$ is the length of the alignment;
\item Test for suitability of the edge corresponding to $A|B$ belonging to $\mathcal{T}$ by computing the rank of $\text{Flat}_{A|B}(F)$; 
\item If this rank is ``close'' to $k$, conclude that $A|B$ is in the tree $\mathcal{T}$.
\end{enumerate}

Ignoring issues of how to numerically estimate the rank (\cite{eriksson2005tree} suggests using the SVD decomposition, and \cite{casanellas2007performance} have explored using the minors directly), or how one decides what we mean by ``close'' in a statistically justified way (this is an open problem generally), as a first step this process provides a mathematically elegant test for the evidence of a specific edge in a phylogenetic tree.
Significantly, this test is statistically valid no matter what Markov model of molecular state evolution is presupposed (since it is valid for the ``general'' Markov model).
On the down side, the complexity of the rank estimation and comparison problem is exponential in the number of taxa $L$ since $\text{Flat}_{A|B}(F)$ is a $k^m\times k^n$ matrix.
For a large number of taxa $L\!=\!m+n$ and finite sequence length, the flattening matrix will clearly be sparse.

The SVD methodology has implemented by \cite{fernandez2015invariant} and \cite{chifman2014quartet} with impressive results.
However, these studies restrict attention to quartets only.
Most recently \cite{allman2016split} presented an efficient implementation of the flattenings and SVD for arbitrary tree sizes using sparse matrix computation.

\section{Dimensional reduction}
\label{sec:dimreduct}

As has been observed in many places \cite{allman2008phylogenetic,bashford2004u,bryant2009hadamard,sumner2012algebra}, phylogenetic models implemented as a Markov chain on a binary tree can be described algebraically using tensor product spaces and/or Kronecker product operations.
Following the discussion given in \cite{sumner2008markov}, if we fix a site pattern distribution $P$ obtained from the general Markov model on a phylogenetic tree as described in the previous section,  we may define an alternative distribution $\tilde{P}$ specified by setting all Markov matrices on the leaves of the tree equal to the identity.
As is shown in \cite{sumner2008markov}, using tensor notation we may write
\beqn
\label{eq:leafextend}
P=M_1\otimes M_2\otimes \ldots \otimes M_L\cdot \tilde{P}.
\eqn

Written in terms of an $m|n$ flattening on the split $A|B$, this equation is expressible using matrix products:
\[
\text{Flat}_{A|B}(P)=M_A\text{Flat}_{A|B}(\tilde{P})M_B^T,
\]
where  $M_A$ is the $k^m\times k^m$ matrix obtained by taking the Kronecker product of the Markov matrices on the leaves belonging to $A$, and  $M_B$ is the $k^n\times k^n$ matrix obtained by taking the Kronecker product of the Markov matrices on the leaves belonging to $B$.

Further, given that Markov matrices are multiplicative (that is, multiplication of two Markov matrices gives another Markov matrix), we may generalise (\ref{eq:leafextend}) to be understood as the transformation rule 
\beqn
\label{eq:leafactionP}
P\rightarrow M_1\otimes M_2\otimes \ldots M_L\cdot P,
\eqn
expressing how a distribution $P$ arising from the general Markov model on a phylogenetic tree changes if additional Markov matrices $M_i$ are applied at the leaves of the tree.
Expressed in terms of the flattening, this transformation rule becomes
\beqn
\label{eq:leafactionG}
\text{Flat}_{A|B}(P)\rightarrow M_A\text{Flat}_{A|B}(P)M_B^T.
\eqn 


Practical importance of this transformation rule follows from the observation that the uniqueness of the binary tree underlying the phylogenetic model is unaffected by the choice of Markov matrices on the leaves (even under the condition that they are set equal to the identity matrix), and this is highlighted by the fact that the corresponding statement is not true for the internal edges of the tree.
To see this, one may refer to Figure~\ref{fig:7leaf} where restricting the matrix $M_8$ to be equal to the identity would imply, as judged by the distribution $P$, the split $\{2,3\}|\{1,4,5,6,7\}$ is indistinguishable from the alternative splits $\{1,2\}|\{3,4,5,6,7\}$, $\{1,3\}|\{2,4,5,6,7\}$, or $\{1,2,3\}|\{4,5,6,7\}$.
Additionally, it is easily seen that, if the relevant Markov matrices are of full rank, \emph{the rank of the flattening is unaffected by the transformation rule (\ref{eq:leafactionG})}.
The general philosophy we follow is that the transformation rules (\ref{eq:leafactionP}) and (\ref{eq:leafactionG}) are foundational to algebraic approaches to identifying splits on phylogenetic trees.
This aligns with our previous work on Markov invariants \cite{sumner2008markov} which seeks to find polynomials on the variables $(p_{i_1i_2\ldots i_L})_{i_j\in \kappa}$ that are invariant (up to scaling) under the transformation rule (\ref{eq:leafactionP}).

In this section we show that we can significantly reduce the dimensionality of the matrices involved in the transformation rule (\ref{eq:leafactionG}) by implementing  a linear change of coordinates on the flattening $\text{Flat}_{A|B}(P)$ and identifying the ``sub-flattening'' $\widehat{\text{Flat}}'_{A|B}(P)$ as a certain sub-matrix thereof.
Crucially, the dimension of the sub-flattening $\widehat{\text{Flat}}'_{A|B}(P)$ is \emph{quadratic} in the number of taxa $L$ and the sub-flattening retains contrasting  rank conditions when the true tree $\mathcal{T}$ does or does not contain the edge corresponding to $A|B$.

Before we do this however, the reader should note that the sub-flattening  $\widehat{\text{Flat}}'_{A|B}(P)$ is distinct from the related concept of ``thin flattenings'' introduced in \cite{casanellas2011relevant}, which are specific to each Markov model in the ``equivariant'' hierarchy presented in \cite{draisma2009ideals}.
Although the components of a thin flattening have reduced dimension (relative to the dimension of the full flattening $\text{Flat}_{A|B}(P)$), their dimension remains exponential in the number of taxa.
Importantly, in the case of the general Markov model, the thin flattening is equal to the original flattening and hence does not produce a dimensional reduction.

To derive the sub-flattening $\widehat{\text{Flat}}'_{A|B}(P)$, we make the simple observation that, under an appropriate similarity transformation, the condition that Markov matrices have unit column-sums\footnote{If one prefers to use unit row-sum Markov matrices, an analogous construction is obtained by taking the transpose in what follows.} is converted into the condition that a certain row has $k\!-\!1$ zeros and a single 1. 
Concretely, we may choose the similarity transformation $S$ so any $k\times k$ Markov matrix takes the form
\beqn
\label{eq:simtrans}
M':=SMS^{-1}=
\left(
\begin{matrix}
T & v \\
0 & 1
\end{matrix}
\right),
\eqn
where $T$ is a $(k\!-\!1)\!\times\! (k\!-\!1)$ matrix and $v$ is a $k\!-\!1$ column vector.
Generically (and hence without loss of relevance to the applied setting), we may assume $\det(M)\neq 0$ which in turn implies $\det(T)\neq 0$.
The set of such matrices (\ref{eq:simtrans}) then forms what is known as the \emph{affine group} $\text{Aff}(k\!-\!1)$ \cite[Chap. 1]{baker2012matrix}.
This is the (matrix) Lie group of invertible affine linear transformations on $\mathbb{R}^{k-1}$ given by the semi-direct product $\text{GL}(k\!-\!1)\ltimes \mathbb{R}^{k\!-\!1}$. 
(This correspondence under similarity transformations between Markov matrices and the affine group seems to have first been made in \cite{johnson1985markov}.)

In applications it is always the case that the entries of Markov matrices $M$ are non-negative --- this after all ensures probability conservation of the Markov process.
However, we need not worry about these additional conditions to obtain the results we wish to derive here --- the above embedding of the transformed matrix $M'$ into the affine group is all we require.

Also noteworthy is the fact that there are infinitely many choices of similarity transformations $S$ which will achieve the form given above: any invertible matrix with a constant row will work.
This is not an issue for the theoretical results given in this paper, as they are independent of the particular choice of suitable similarity transformation $S$.
However, in applications where practical considerations become important (such as finite sequence lengths) certain choices of $S$ may outperform others.
Some further comments on these matters are given in the discussion below.

Under an $m|n$ flattening we have $M_A=M_{s_1}\otimes M_{s_2}\otimes \ldots \otimes M_{s_m}$ as a Kronecker product of the $m$ Markov matrices on the leaves $A$.
After taking the similarity transformation $M_{s_j}\mapsto M'_{s_j}=SM_{s_j}S^{-1}$ on each factor, an elementary computation shows that (possibly after some simultaneous row and column permutations) we can extract an $\left(m(k\!-\!1)\!+\!1\right)\times \left(m(k\!-\!1)\!+\!1\right)$ sub-block which takes the form
\beqn
\label{eq:subaction}
\widehat{M}'_A:=
\left(
\begin{matrix}
T_{i_1} & 0 & 0 & 0 & \cdots & v_{i_1} \\
0 & T_{i_2} & 0 & 0 & \cdots & v_{i_2}  \\
0 & 0 & T_{i_3} & 0 & \cdots & v_{i_3} \\
\vdots & & & \ddots  & & \vdots \\
0 & 0 & 0 & \cdots & T_{i_m} & v_{i_m} \\
0 & 0 & 0 & \cdots & 0 & 1\\
\end{matrix}
\right).
\eqn
Similarly, a corresponding expression exists for $\widehat{M}'_B$ as an $\left(n(k\!-\!1)\!+\!1\right)\times \left(n(k\!-\!1)\!+\!1\right)$ matrix.

As a simple example to convince the reader, consider the case of a two-fold Kronecker product:
\[
M'_{12}
\equiv
M'_1\otimes M'_2
=
\left(
\begin{matrix}
T_1 & v_1 \\
0 & 1
\end{matrix}
\right)
\otimes
\left(
\begin{matrix}
T_2 & v_2 \\
0 & 1
\end{matrix}
\right)
=
\left(
\begin{matrix}
T_1\otimes T_2 & T_1\otimes v_2  & v_1\otimes T_2 & v_1\otimes v_2 \\
0 & T_1  & 0 & v_1 \\
0 & 0 & T_2 & v_2 \\
0 & 0 & 0 & 1
\end{matrix}
\right),
\]
and define 
\[
\widehat{M}'_{12}:=
\left(
\begin{matrix}
T_1  & 0 & v_1 \\
0 & T_2 & v_2 \\
0 & 0 & 1
\end{matrix}
\right).
\]
Applying additional Kronecker products requires some simultaneous row and column permutations to obtain the general form given in (\ref{eq:subaction}); however, the general construction and the reason for its existence should be clear.

In group theoretical language, we now observe (using elementary matrix multiplication) that (\ref{eq:subaction}) provides a \emph{representation}\footnote{The meaning of this will be given in the proof of Theorem~\ref{thm:subrep}.} of the group $\times^m\text{Aff}(k\!-\!1)$, where $\times^m\text{Aff}(k\!-\!1)$ is realised by taking $m$ direct products of $\text{Aff}(k\!-\!1)$ via the Kronecker product $M'_A=M'_{s_1}\otimes M'_{s_2}\otimes \ldots \otimes M'_{s_m}$.
Ultimately, the dimensional reduction we are about to describe hinges on the existence of this representation and is very much in accord with our general philosophy that emphasizes the importance of the transformation rules (\ref{eq:leafactionP}) and (\ref{eq:leafactionG}), as well as our previous explorations of Markov invariants \cite{sumner2008markov,jarvis2014adventures}.

\begin{thm}
\label{thm:subrep}
The form (\ref{eq:subaction}) provides a faithful (``one to one'') representation of the direct product group $\times^m\emph{Aff}(k\!-\!1)$ with matrix entries that are linear in the matrix entries of the individual copies of $\emph{Aff}(k\!-\!1)$.  
\end{thm}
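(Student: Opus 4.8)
The plan is to pin down the relevant definition and then verify the three assertions — representation, faithfulness, and linearity of the entries — essentially by inspection of (\ref{eq:subaction}). Recall that a finite-dimensional real \emph{representation} of a group $G$ is a homomorphism $\rho\colon G\to\text{GL}(V)$ into the invertible linear maps of a real vector space $V$; the content of the theorem is that the assignment $\rho$ sending a tuple $(M'_{s_1},\ldots,M'_{s_m})$ — with each $M'_{s_j}=(\begin{smallmatrix}T_{i_j}&v_{i_j}\\0&1\end{smallmatrix})$ — to the matrix $\widehat{M}'_A$ of (\ref{eq:subaction}) is such a homomorphism, for $G=\times^m\text{Aff}(k-1)$ and $V=\mathbb{R}^{m(k-1)+1}$, and is moreover injective with entries depending linearly on the entries of the factors $M'_{s_j}$.

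First I would establish the homomorphism property by block-multiplying two matrices of the form (\ref{eq:subaction}). Indexing the $m+1$ block-rows and block-columns by $1,\ldots,m+1$, and writing the second factor with diagonal blocks $U_{i_j}$ and last-column blocks $w_{i_j}$, the shape of (\ref{eq:subaction}) — diagonal blocks $T_{i_1},\ldots,T_{i_m},1$ and the only off-diagonal entries lying in the last block-column — forces the product to have $(a,b)$ block equal to $0$ for $a\ne b\le m$, to $T_{i_a}U_{i_a}$ for $a=b\le m$, to $T_{i_a}w_{i_a}+v_{i_a}$ for $a\le m$ and $b=m+1$, and to $0$ or $1$ along the bottom block-row. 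Thus the product is again of the form (\ref{eq:subaction}), with $j$-th data $(T_{i_j}U_{i_j},\,T_{i_j}w_{i_j}+v_{i_j})$, which is precisely $(\begin{smallmatrix}T_{i_j}&v_{i_j}\\0&1\end{smallmatrix})(\begin{smallmatrix}U_{i_j}&w_{i_j}\\0&1\end{smallmatrix})$ in $\text{Aff}(k-1)$, read off coordinate by coordinate. Since the identity tuple is visibly sent to the identity matrix and $\det\widehat{M}'_A=\prod_j\det T_{i_j}\ne 0$ (recall $T_{i_j}\in\text{GL}(k-1)$), $\rho$ is a well-defined group homomorphism into $\text{GL}(m(k-1)+1)$, i.e.\ a representation. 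Conceptually this is no accident: $\widehat{M}'_A$ is the matrix of the action that the Kronecker product $M'_A=M'_{s_1}\otimes\cdots\otimes M'_{s_m}$ induces on the quotient of $(\mathbb{R}^k)^{\otimes m}$ by the invariant span of all tensors having at least two factors in the $\text{Aff}(k-1)$-stable flag $0\subset\mathbb{R}^{k-1}\subset\mathbb{R}^k$ — but the direct block computation is all the proof strictly needs.

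Faithfulness and linearity are then immediate. The blocks $T_{i_j}$ and $v_{i_j}$ can be read directly off $\widehat{M}'_A$ and they recover $M'_{s_j}=(\begin{smallmatrix}T_{i_j}&v_{i_j}\\0&1\end{smallmatrix})$, so $\rho$ is one-to-one. And every entry of $\widehat{M}'_A$ is either a constant ($0$ or $1$) or an entry of some $T_{i_j}$ or $v_{i_j}$, hence an entry of a single matrix $M'_{s_j}$; so each entry of $\widehat{M}'_A$ is a linear function of the entries of one of the individual copies of $\text{Aff}(k-1)$, in marked contrast to the degree-$m$ multilinear dependence of the entries of the full Kronecker product $M'_A$.

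I expect the only genuine work to be bookkeeping rather than mathematics: writing down, for general $m$, the simultaneous row-and-column permutation that extracts (\ref{eq:subaction}) as a sub-block of $M'_A=M'_{s_1}\otimes\cdots\otimes M'_{s_m}$ — the $m=2$ case displayed in the text is transparent, but organising the string indices $i_1\ldots i_m$ for arbitrary $m$ takes care — and confirming that this sub-block is genuinely closed under multiplication for every $m$, not merely $m=2$. Since a simultaneous row/column permutation is conjugation by a permutation matrix, it preserves the homomorphism property, injectivity, and linearity of the entries alike, so once the combinatorial set-up is fixed the argument above goes through verbatim.
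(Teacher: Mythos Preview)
Your proposal is correct and follows essentially the same approach as the paper: verify the homomorphism property by direct block multiplication of two matrices of the form (\ref{eq:subaction}), then observe that faithfulness and linearity are obvious by inspection. The paper's proof is terser --- it omits your conceptual quotient-space remark, the determinant check, and the bookkeeping discussion about extracting the sub-block --- but the substance is identical.
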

\begin{proof}
The composition rule for two matrices in the affine group $\text{Aff}(k\!-\!1)$ is revealed by the computation 
\[
\left(
\begin{matrix}
T & v \\
0 & 1
\end{matrix}
\right)
\left(
\begin{matrix}
U & w \\
0 & 1
\end{matrix}
\right)
=\left(
\begin{matrix}
TU & Tw+v \\
0 & 1
\end{matrix}
\right).
\]
In abstract terms, this can be understood by mapping each member $\left(\begin{smallmatrix}
T & v \\
0 & 1
\end{smallmatrix}
\right)
\in
\text{Aff}(k\!-\!1)$ to the pair $(T,v)$ and defining the group product ``$\ast$'' via the rule\footnote{In fact, this the natural way to \emph{define} $\text{Aff}(k\!-\!1)$ in the first place.}:
\beqn
\label{eq:affrule}
(T,v)\ast (U,w)=(TU,Tw+v).
\eqn
In the case of the group $\times^m \text{Aff}(k\!-\!1)$, each member may be represented as an $m$-tuple 
\[
\left((T_1,v_1),(T_2,v_2),\ldots,(T_m,v_m)\right),
\] 
and the product rule is provided by mimicking (\ref{eq:affrule}) in each entry of the tuple.

Correspondingly, the product of any two matrices of the form (\ref{eq:subaction}) is given by
\scriptsize
\begin{align*}
\left(
\begin{matrix}
T_{i_1} & 0  & 0 & \cdots & v_{i_1} \\
0 & T_{i_2}  & 0 & \cdots & v_{i_2}  \\
\vdots &  & \ddots  & & \vdots \\
0 & 0  & \cdots & T_{i_m} & v_{i_m} \\
0 & 0  & \cdots & 0 & 1\\
\end{matrix}
\right)
&
\left(
\begin{matrix}
U_{i_1} & 0  & 0 & \cdots & w_{i_1} \\
0 & U_{i_2}  & 0 & \cdots & w_{i_2}  \\
\vdots &  & \ddots  & & \vdots \\
0 & 0  & \cdots & U_{i_m} & w_{i_m} \\
0 & 0  & \cdots & 0 & 1\\
\end{matrix}
\right)
=
\left(
\begin{matrix}
T_{i_1}U_{i_1} & 0  & 0 & \cdots & T_{i_1}w_{i_1}+v_{i_1} \\
0 & T_{i_2}U_{i_2} & 0 & \cdots & T_{i_2}w_{i_2}+v_{i_2}  \\
\vdots &  & \ddots  & & \vdots \\
0 & 0  & \cdots & T_{i_m}U_{i_m} & T_{i_m}w_{i_m}+v_{i_m} \\
0 & 0  & \cdots & 0 & 1\\
\end{matrix}
\right).
\end{align*}
\normalsize
Inspection of this product then shows that the matrices (\ref{eq:subaction}) form what is known as a representation of $\times^m \text{Aff}(k\!-\!1)$, as required.

Faithfulness of the representation --- considered as an injective mapping from $\times^m \text{Aff}(k\!-\!1)$ into the set of matrices of the form (\ref{eq:subaction}), is obvious.

Linearity of the matrix entries in (\ref{eq:subaction})  is also obvious.

\end{proof}


Under the similarity transformation (\ref{eq:simtrans}), the distribution $P$ changes to (again expressed in tensor notation\footnote{Exactly how this affects the site pattern probabilities $p_{i_1i_2\ldots i_L}$ is given  in the  appendix (\ref{eq:Psim}).}):
\beqn
\label{eq:simP}
P\rightarrow \underbrace{S\otimes S\otimes \ldots \otimes S}_{L=m+n\text{ products}}\cdot P.
\eqn
Concurrently, the ``transformed flattening'' $\text{Flat}'_{A|B}(P)$, is defined as
\beqn
\text{Flat}'_{A|B}(P):&=\text{Flat}_{A|B}(\underbrace{S\otimes S\otimes \ldots \otimes S}_{L=m+n\text{ products}}\cdot P)
\\
&=\underbrace{\left(S\otimes S\otimes \ldots \otimes S\right)}_{m\text{ products}}\text{Flat}_{A|B}(P){\underbrace{\left(S\otimes S\otimes \ldots \otimes S\right)}_{n\text{ products}}}^T.
\nonumber
\eqn
Thus, the transformation rule (\ref{eq:leafactionG}) can then be expressed as
\beqn
\label{eq:leafactionS}
\text{Flat}'_{A|B}(P)\rightarrow M'_A\text{Flat}'_{A|B}(P){M'_B}^{\!\!T}.
\eqn

We now locate the subset of $(m(k\!-\!1)\!+\!1)$ rows and subset of $(n(k\!-\!1)\!+\!1)$ columns of $\text{Flat}'_{A|B}(P)$ corresponding to the rows of $\widehat{M}'_A$ and the columns of $\left.\widehat{M}'_B\right.^{\!\!T}$ respectively (details are given in the appendix (\ref{eq:locate})), and define the ``sub-flattening'' $\widehat{\text{Flat}}'_{A|B}(P)$ as a $(m(k\!-\!1)\!+\!1)\times (n(k\!-\!1)\!+\!1)$ sub-matrix of $\text{Flat}'_{A|B}(P)$, and observe:

\begin{cor}
\label{cor:trans}
The sub-flattening $\widehat{\emph{Flat}}'_{A|B}(P)$ satisfies the transformation rule
\beqn
\label{eq:subflattrans}
\widehat{\text{Flat}}'_{A|B}(P)\rightarrow \widehat{M}'_A\widehat{\text{Flat}}'_{A|B}(P)\left.\widehat{M}'_B\right.^{\!\!T}.
\eqn
\end{cor}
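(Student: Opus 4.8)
The plan is to deduce (\ref{eq:subflattrans}) directly from the transformation rule (\ref{eq:leafactionS}) for the full transformed flattening, by showing that the selected rows and columns are closed under left multiplication by $M'_A$ and right multiplication by ${M'_B}^{T}$. First I would fix notation. Let $\Sigma_A$ denote the set of $m(k\!-\!1)\!+\!1$ row indices of $\text{Flat}'_{A|B}(P)$ that are extracted to form $\widehat{\text{Flat}}'_{A|B}(P)$; after the simultaneous permutation that brings $M'_A$ into the form (\ref{eq:subaction}), these are exactly the multi-indices $(a_{s_1},\dots,a_{s_m})$ having at most one coordinate different from the distinguished ``last'' value $k$. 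Let $\Sigma_B$ be the analogous set of $n(k\!-\!1)\!+\!1$ column indices. By the construction leading to (\ref{eq:subaction}), the sub-matrix of $M'_A$ on rows and columns $\Sigma_A$ is precisely $\widehat{M}'_A$, and likewise for $\widehat{M}'_B$.

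The key step is the vanishing claim: $(M'_A)_{ac}=0$ whenever $a\in\Sigma_A$ and $c\notin\Sigma_A$, and correspondingly for $M'_B$. Since $M'_A=M'_{s_1}\otimes\dots\otimes M'_{s_m}$ is a Kronecker product, $(M'_A)_{ac}=\prod_{j=1}^m (M'_{s_j})_{a_{s_j}c_{s_j}}$; if $a\in\Sigma_A$ then at least $m\!-\!1$ of the coordinates $a_{s_j}$ equal $k$, so for each such $j$ the factor $(M'_{s_j})_{a_{s_j}c_{s_j}}$ is an entry of the last row of $M'_{s_j}$, which by (\ref{eq:simtrans}) is $(0,\dots,0,1)$, forcing $c_{s_j}=k$. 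Hence at least $m\!-\!1$ coordinates of $c$ equal $k$, i.e.\ $c\in\Sigma_A$, and the claim follows; the argument for $M'_B$ is identical. (Equivalently: the coordinate subspace spanned by the basis vectors indexed by the complement of $\Sigma_A$ is invariant under $M'_A$, and $\widehat{M}'_A$ is simply the action induced on the quotient space.)

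Granting this, the corollary follows by a short entry-wise calculation in the style of the proof of Theorem~\ref{thm:subrep}. For $a\in\Sigma_A$ and $b\in\Sigma_B$,
\begin{align*}
\big(M'_A\,\text{Flat}'_{A|B}(P)\,{M'_B}^{T}\big)_{ab}
&=\sum_{c,d}(M'_A)_{ac}\,\text{Flat}'_{A|B}(P)_{cd}\,(M'_B)_{bd}\\
&=\sum_{c\in\Sigma_A,\ d\in\Sigma_B}(\widehat{M}'_A)_{ac}\,\widehat{\text{Flat}}'_{A|B}(P)_{cd}\,(\widehat{M}'_B)_{bd},
\end{align*}
where the second equality uses the vanishing claim to restrict the $c$-sum (via $M'_A$) and the $d$-sum (via $M'_B$), together with $(M'_A)_{ac}=(\widehat{M}'_A)_{ac}$ for $a,c\in\Sigma_A$, $(M'_B)_{bd}=(\widehat{M}'_B)_{bd}$ for $b,d\in\Sigma_B$, and the defining identity $\widehat{\text{Flat}}'_{A|B}(P)_{cd}=\text{Flat}'_{A|B}(P)_{cd}$ on $\Sigma_A\times\Sigma_B$. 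The right-hand side is the $(a,b)$ entry of $\widehat{M}'_A\,\widehat{\text{Flat}}'_{A|B}(P)\,{\widehat{M}'_B}^{T}$, so combining with (\ref{eq:leafactionS}) gives (\ref{eq:subflattrans}).

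The only point that requires care is the index bookkeeping underlying the vanishing claim: one must confirm that ``at most one coordinate different from $k$'' is exactly the row/column index set singled out by the permutations that produce (\ref{eq:subaction}), and that on that set the entries of the Kronecker product $M'_A$ reproduce $\widehat{M}'_A$ itself rather than some conjugate of it. Both facts are immediate from the product formula for Kronecker products and the shape (\ref{eq:simtrans}) of each factor — and are already visible in the two-fold example displayed above — so this step is entirely organizational and introduces no new estimate or construction.
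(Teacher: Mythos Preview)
Your proof is correct and is essentially the same argument as the paper's: the paper's one-line appeal to block upper-triangularity of (\ref{eq:subaction}) is precisely your ``vanishing claim'' that $(M'_A)_{ac}=0$ for $a\in\Sigma_A$, $c\notin\Sigma_A$ (visible in the two-fold example as $\widehat{M}'_{12}$ sitting as the lower-right block of an upper-triangular $M'_{12}$), and your entry-wise computation simply spells out what this triangularity does to (\ref{eq:leafactionS}).
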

\begin{proof}
The result follows from (\ref{eq:leafactionS}) together with Theorem~\ref{thm:subrep} and the observation that the form $\widehat{M}'_A$ (\ref{eq:subaction}) is block upper-triangular, and similarly the transpose $\left.\widehat{M}'_B\right.^{\!\!T}$ is block lower-triangular.
\end{proof}

Strikingly, according to Theorem~\ref{thm:subrep}, the sub-flattening transforms \emph{bilinearly}  in the parameters of the Markov matrices acting at the leaves of the phylogenetic tree (as opposed to the multi-linear rule exhibited in  (\ref{eq:leafactionP}) and (\ref{eq:leafactionG})).
Additionally, we find that the contrasting rank conditions on $\text{Flat}_{A|B}(P)$ discussed in the previous section are retained in a modified form on the sub-flattening $\widehat{\text{Flat}}'_{A|B}(P)$.
To state this result, we first need:
\begin{mydef}
We say that a phylogenetic model on a tree $\mathcal{T}$ is \emph{generic} if the initial distribution $(\pi_i)_{i\in \kappa}$ at the root of the phylogenetic tree has no zero entries and the Markov matrix $M_{e}$ on each edge $e\in\mathcal{T}$ has full rank.
\end{mydef}

It follows that:
\begin{thm}
\label{thm1}
The sub-flattening $\widehat{\text{Flat}}'_{A|B}(P)$ has rank at most $r(k\!-\!1)\!+\!1$, where $r\geq 1$ is the parsimony score for the split $A|B$ considered as a binary character at the leaves of $\mathcal{T}$.
Up to the specified bound, maximal rank is attained in the generic case.
In particular, in the generic case, the sub-flattening $\widehat{\text{Flat}}'_{A|B}(P)$ has rank $k$ if the edge corresponding to the split $A|B$ occurs in $\mathcal{T}$ (since in this case $r\!=\!1$) and rank equal to or greater than $2(k\!-\!1)+1$ otherwise.
\end{thm}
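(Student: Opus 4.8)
The plan is to reduce the rank computation of $\widehat{\text{Flat}}'_{A|B}(P)$ to a statement about the original flattening restricted to a cleverly chosen basis, and then invoke the known parsimony/rank result of \cite{eriksson2005tree} for $\text{Flat}_{A|B}$. First I would fix a generic model and recall from \S\ref{sec:background} (as extended in \cite{eriksson2005tree}) that the rank of $\text{Flat}_{A|B}(P)$ is governed by $r$, the minimum number of edges one must cut in $\mathcal{T}$ so that the split $A|B$ is realised — equivalently the parsimony score of $A|B$ viewed as a binary character on the leaves — with the bound $\text{rank}\,\text{Flat}_{A|B}(P)\le k^r$ attained generically. The key is to translate this $k^r$ bound into the linear $r(k-1)+1$ bound on the sub-flattening.

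The core argument is the factorisation $\text{Flat}_{A|B}(P) = M_A\,\text{Flat}_{A|B}(\tilde P)\,M_B^T$, pushed through the similarity transformation $S$ so that it reads in terms of $M'_A$, $M'_B$ and the internal structure. Second, I would look at the tree with the $A|B$ edge (or the relevant set of $r$ edges) contracted and write $\text{Flat}'_{A|B}(P)$ as a product involving the block-triangular $M'_A$, a ``core'' matrix $Q$ depending only on the internal Markov matrices and $\pi$, and $(M'_B)^T$; by Corollary~\ref{cor:trans} the sub-flattening inherits the analogous factorisation $\widehat{\text{Flat}}'_{A|B}(P) = \widehat{M}'_A\,\widehat Q\,(\widehat{M}'_B)^T$ with $\widehat Q$ a sub-block of $Q$. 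Since $\widehat M'_A$ and $\widehat M'_B$ are invertible (generically $\det T_{i_j}\neq 0$), $\text{rank}\,\widehat{\text{Flat}}'_{A|B}(P) = \text{rank}\,\widehat Q$, so the whole problem collapses to bounding the rank of the core sub-block $\widehat Q$.

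Third — and this is where the $r(k-1)+1$ comes from — I would analyse $\widehat Q$ directly. The structure (\ref{eq:subaction}) has, besides the trailing ``$1$'', exactly $m$ blocks of size $k-1$ on the $A$ side, one per leaf in $A$; cutting along a set of $r$ edges that realise the split partitions the leaves into groups, and the rows of $\widehat Q$ corresponding to leaves in the same connected component (after the cut) become linearly dependent up to the action of the internal $T$-blocks, because the only information passing across a cut edge is $(k-1)$-dimensional (the ``$T$'' part) plus the affine shift (the ``$v$''/``$1$'' row). Concretely I expect to show that $\widehat Q$ factors through an intermediate space of dimension $r(k-1)+1$: the $+1$ is the surviving affine row common to all blocks, and each of the $r$ cut edges contributes one $(k-1)$-dimensional channel. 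When $A|B$ is an edge of $\mathcal{T}$, $r=1$ and the bound is $(k-1)+1=k$, recovering the edge-invariant rank; when it is not, parsimony forces $r\ge 2$, giving $\ge 2(k-1)+1$. Genericity (no zero entries in $\pi$, full-rank Markov matrices) is exactly what guarantees the intermediate maps are of full rank so the bound is tight.

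The main obstacle will be step three: making precise the claim that $\widehat Q$ has rank exactly $r(k-1)+1$ in the generic case, i.e. exhibiting the factorisation through the $(r(k-1)+1)$-dimensional intermediate space and checking both that no larger rank is possible (the block-diagonal-plus-affine-column structure genuinely only transmits $(k-1)$ dimensions per cut edge) and that genericity prevents collapse to smaller rank. I would handle the upper bound by an explicit column/row reduction on $\widehat Q$ using the contracted-tree description, grouping the $m$ leaf-blocks according to the components of $\mathcal{T}$ after the $r$-edge cut and showing each group's block is a fixed $(k-1)\times(k-1)$ matrix times a common factor, plus the shared affine row. The tightness (maximal rank) I would get by specialising parameters — e.g. choosing the internal Markov matrices near generic values where the intermediate maps are visibly injective — and invoking the fact, already used for $\text{Flat}_{A|B}$ in \cite{eriksson2005tree}, that rank is lower semicontinuous so a single generic witness suffices.
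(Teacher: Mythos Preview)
Your proposal takes a genuinely different route from the paper. The paper does \emph{not} try to factor the sub-flattening through an $r(k{-}1){+}1$-dimensional interface at a set of cut edges. Instead it argues inductively on divergence events: first it proves (Lemma~\ref{lem:indep}) that a divergence event placing the new leaf on the \emph{same} side of the split simply repeats $k{-}1$ columns (or rows) of the sub-flattening, so any $A|B$-consistent subtree may be collapsed to a single leaf without changing the rank; then it proves (Lemma~\ref{lem:fullrank}) that once every maximally $A|B$-consistent subtree is a single leaf, each further divergence placing the new leaf on the \emph{opposite} side adds $k{-}1$ genuinely new rows, linear independence being forced by the non-existence of linear phylogenetic invariants for the general Markov model. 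Fitch's algorithm then identifies the number of such ``opposite-side'' divergences with the parsimony score $r$. So the paper obtains both the upper bound and generic tightness in one inductive sweep, and never invokes the $k^r$ result of \cite{eriksson2005tree} or a semicontinuity argument.

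Your outline is plausible for the upper bound but has a real gap at the step you yourself flag as the main obstacle. Corollary~\ref{cor:trans} only strips off the \emph{leaf} Markov matrices, so your $\widehat Q$ is just $\widehat{\text{Flat}}'_{A|B}(\tilde P)$ and still carries the full internal tree structure; bounding its rank is the same problem you started with. To push the factorisation down to the $r$ cut edges you need to show that the map from cut-edge states to the selected sub-rows lands in an $r(k{-}1){+}1$-dimensional subspace, and your description (``each group's block is a fixed $(k{-}1)\times(k{-}1)$ matrix times a common factor, plus the shared affine row'') is essentially asserting Lemma~\ref{lem:indep} without proving it: why do all leaves in one component contribute linearly dependent blocks? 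The map from a single cut-edge state to the leaves of its component is not a Kronecker product of leaf matrices, so the affine-group representation (\ref{eq:subaction}) does not apply directly there. You would end up having to redo the paper's column-repetition calculation in disguise. Likewise, for tightness, ``specialise parameters and use semicontinuity'' hides the issue of exhibiting even one parameter choice where the rank is exactly $r(k{-}1){+}1$; the paper handles this by an explicit inductive argument that each opposite-side divergence introduces $q$-coordinates not present elsewhere, and then appeals to the absence of linear phylogenetic invariants rather than to a generic witness.
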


\begin{proof}
The proof is given in Appendix~\ref{append}.
\end{proof}


These rank conditions are exactly analogous to the conditions for the flattenings given in \cite{eriksson2005tree} under the substitution $k^r\rightarrow r(\!k-\!1)\!+\!1$.
We hence propose to use the sub-flattenings as an alternative, practical test for the existence of specific edges in a phylogenetic tree.

\section{Discussion}
\label{sec:discuss}

Inspired by the previous approaches taken by \cite{allman2016split,casanellas2007performance,chifman2014quartet,eriksson2005tree,fernandez2015invariant} (as described at the end of the background section), we are currently exploring the computational means to exploit Theorem~\ref{thm1} in a practical setting.
A potential obstruction is the change of basis required to convert an observed site pattern array $F$ via the  similarity transformation $S$. 
However, we note that this may be efficiently achieved by only computing the entries required for the sub-flattening and also by summing over only the non-zero entries of $F$. 
For large number of taxa $L$, $F$ is an extremely large, and hence sparse, array.
Without observing that the required transformation can be achieved efficiently there simply would be no way of computing the sub-flattening efficiently and hence the approach described in this paper would be of no practical use.
From these observations, we claim this is not an insurmountable computational obstruction.

Additionally, in practical cases where the observed site pattern array $F$ is obtained from \emph{finite} sequence alignments, and hence is subject to standard statistical sampling error, it is quite possible that the specific choice of similarity transformation $S$ (beyond having a constant row) will have an effect upon the performance of the method.
Given this observation, and general best practice in numerical computation, it is likely that it is sensible to demand $S$ to be an orthogonal matrix.
At this stage we have not investigated this further (either theoretically or via simulation).

We suspect that the best way to test for low rank of the sub-flattenings will be via SVD, but at present it is unclear what the optimal numeric approach will be since the statistical properties under multinomial sampling of the algebraic methods described here are unclear.
Future work will explore these questions via simulations and testing on real data sets.
Nonetheless, the results presented in this paper are expected to lead to a novel, computationally efficient phylogenetic method consistent with the general Markov model of molecular state evolution.

\subsection*{Acknowledgments}
This work was inspired from a question Alexei Drummond put to Barbara Holland during her presentation at the New Zealand Phylogenetics Meeting, DOOM 2016.
I would also like to thank the anonymous reviewer for their careful and substantive comments that lead to a greatly improved manuscript. 

\subsection*{Funding}
This work was supported by the Australian Research Council Discovery Early Career Fellowship DE130100423.

\bibliographystyle{plain}
\bibliography{biblio}

\appendix
\numberwithin{equation}{section}
\section{Proof of Theorem~\ref{thm1}}
\label{append}

Our general approach to the proof will be to give conditions for when the rank of the sub-flattenings does or does not grow under phylogenetic divergence events.
In particular, we will show that the rank of the sub-flattenings is unchanged after a phylogenetic divergence event which is ``consistent'' with the split under consideration (the precise meaning of this will become evident below).
Although related, our proof method is different in conception from the approach taken in \cite{eriksson2005tree} for obtaining the analogous conditions for the ranks of the full flattenings.

\begin{mydef}
Given a rooted tree $\mathcal{T}$, consider the subtrees consisting of a vertex in $\mathcal{T}$ together with all of its descendants (including the case where the subtree consists of a leaf vertex only).   
Given a subset $A$ of leaves, we say such a subtree \emph{is $A$-consistent} if its leaves are a subset of $A$.
We say an $A$-consistent subtree is \emph{maximally $A$-consistent} if it is not itself a subtree of an $A$-consistent subtree.
Similarly, given a split $A|B$ we say that a subtree is  \emph{$A|B$-consistent} if its leaves are a subset of $A$ \emph{or} $B$; together with the corresponding definition of \emph{maximally $A|B$-consistent}.
\end{mydef}
An example is given in Figure~\ref{fig:subtrees}.

%
%
%
%
%
%
%

\begin{figure}[htb]
\caption{A rooted tree with two maximally $A|B=\{2,3,4,6\}|\{1,5,7,8,9,10\}$ consistent subtrees indicated .
The subtree with leaf set $\{3,4\}$ is $A$-consistent, but not maximally so.}
\centering
    \pstree[levelsep=5ex,treesep=.5cm,nodesep=.8pt]{\TC*[radius=3pt] }{%
    
     \pstree{\TC*[radius=3pt]}{
                \skiplevels{2}
                \TR{$1$}
                \endskiplevels
                \pstree{\TC*[radius=3pt]}{
                \psset{linestyle=dashed}
                \skiplevel{\TR{2}} 
                \pstree{\TC*[radius=3pt]}{
                \TR{3}
                \TR{4}
                } 
                }
            }

     \pstree{\TC*[radius=3pt] }{       
     	\pstree{\TC*[radius=3pt]}{
                \pstree{\TC*[radius=3pt]}{
                \TR{$5$}
                \TR{$6$} 
                } 
                \skiplevel{\TR{$7$}} 
                }
     	\pstree{\TC*[radius=3pt]}{
     		\psset{linestyle=dotted,linewidth=0.08}
        		\skiplevel{\TR{$8$}} 
                \pstree{\TC*[radius=3pt]}{
                \TR{$9$}
                \TR{10} 
                } 
                }
      }

       }
    \label{fig:subtrees}
\end{figure}

\begin{lem}
\label{lem:indep}
If $P$ is a pattern distribution arising from a tree $\mathcal{T}$ under the general Markov model, the rank of the sub-flattening $\widehat{\text{Flat}}'_{A|B}(P)$ is independent of the size and/or structure of any $A|B$-consistent subtrees of $\mathcal{T}$.
\end{lem}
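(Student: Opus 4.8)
The plan is to reduce the statement to the effect of a single ``elementary'' phylogenetic divergence event occurring inside an $A|B$-consistent subtree, and to show that such an event does not change the rank of the sub-flattening. First I would recall, via \eqref{eq:leafextend}, that $P$ is obtained from the ``leaf-trivial'' distribution $\tilde P$ by applying the Kronecker product of the leaf Markov matrices, and that under the similarity transformation these become matrices of the affine form \eqref{eq:simtrans}. Any binary tree $\mathcal{T}$ can be built up from the trivial one-vertex tree by a sequence of operations that attach a new leaf (with its own Markov matrix) to an existing edge --- equivalently, by replacing a leaf by a cherry. By Corollary~\ref{cor:trans} and Theorem~\ref{thm:subrep}, applying an invertible (generically, full-rank) Markov matrix at a leaf acts on the sub-flattening by $\widehat{\text{Flat}}' \mapsto \widehat M'_A\,\widehat{\text{Flat}}'\,(\widehat M'_B)^T$, and since $\widehat M'_A$, $\widehat M'_B$ are invertible (their diagonal blocks $T_{i_j}$ are invertible and the matrices are block-triangular with a $1$ in the corner), this leaves the rank unchanged. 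So the rank of $\widehat{\text{Flat}}'_{A|B}(P)$ depends only on the ``shape'' of $\mathcal{T}$, not on the leaf parameters; what remains is to show it does not depend on the portion of that shape lying inside an $A|B$-consistent subtree.

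The core step is therefore the following claim: if $\mathcal{T}'$ is obtained from $\mathcal{T}$ by a single divergence event entirely within an $A|B$-consistent subtree --- say, splitting a leaf $i\in A$ into two leaves $i',i''\in A$ via a new internal vertex carrying Markov matrix $M$ --- then $\widehat{\text{Flat}}'_{A|B}(P') $ and $\widehat{\text{Flat}}'_{A|B}(P)$ have the same rank. To see this I would use the Kronecker-product description: $\text{Flat}_{A|B}(P')$ is obtained from $\text{Flat}_{A|B}(P)$ by replacing the single tensor factor at position $i$ with a two-fold tensor factor $M'_{i'}\otimes M'_{i''}$ composed against the ``duplication'' map coming from $M$, all on the $A$-side. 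Passing to sub-flattenings, the relevant sub-block on the $A$-side goes from the affine-type block \eqref{eq:subaction} with one copy of $T,v$ to one with two copies; the extra rows/columns introduced are, by the explicit form of $\widehat{M}'_{12}$ in the two-fold example, obtained from the old ones by multiplication with a block-triangular invertible matrix together with the insertion of rows that are linear combinations of existing rows of $\widehat{\text{Flat}}'$ (because on an $A$-consistent subtree the new leaves both feed through the same ancestral vertex, so the corresponding rows of the full flattening factor through the same $k$-dimensional space). Thus the new rows add nothing to the row space, and symmetrically on the $B$-side, so the rank is preserved. Iterating over a sequence of such elementary events connecting any two $A|B$-consistent subtree structures (and using that we may first collapse each maximally consistent subtree down to a single leaf) gives the lemma.

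The main obstacle I anticipate is making precise and correct the claim that the ``new'' rows and columns introduced by a divergence inside a consistent subtree lie in the span of the old ones. This requires carefully tracking how the sub-flattening index set \eqref{eq:locate} interacts with the Kronecker-product bookkeeping of \eqref{eq:subaction}: one must identify exactly which $(k\!-\!1)$-blocks of rows correspond to which leaves of $A$, show that a leaf-duplication within $A$ contributes a block whose ``$T$-part'' is conjugate to the old block and whose ``$v$-part'' is a combination of old data, and verify that the resulting row operations are invertible on the already-present rows while the genuinely new rows are dependent. The rank statement in the non-consistent (tree-splitting) direction is deliberately \emph{not} part of this lemma --- that is handled separately in the proof of Theorem~\ref{thm1} --- so here I only need the ``does not grow'' direction, which keeps the argument to linear-algebra over the affine-group representation of Theorem~\ref{thm:subrep}.
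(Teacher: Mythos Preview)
Your high-level strategy matches the paper's exactly: reduce to a single divergence event that creates a new leaf on the \emph{same} side of the split as its parent, show this preserves the rank of the sub-flattening, and then invoke Corollary~\ref{cor:trans} to absorb the full-rank leaf Markov matrices. That part is fine.

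The gap is precisely where you flag it, and your sketched route through $\widehat{M}'_{12}$ is not the right tool. The two-fold block $\widehat{M}'_{12}$ describes how \emph{leaf Markov matrices} act on the sub-flattening (Theorem~\ref{thm:subrep}); it does not describe the \emph{duplication map} $\Delta:\mathbb{R}^k\to\mathbb{R}^k\otimes\mathbb{R}^k$ that models the instantaneous divergence. These are different linear maps living in different places, and the block-triangular invertibility of $\widehat{M}'_{12}$ tells you nothing about how $\Delta$ interacts with the sub-flattening row/column selection~\eqref{eq:locate}. Your parenthetical intuition (``the new leaves both feed through the same ancestral vertex, so the corresponding rows of the full flattening factor through the same $k$-dimensional space'') is true for the \emph{full} flattening, but it does not transfer automatically to a submatrix taken after the similarity transform: you must actually check that the $(k\!-\!1)$ new rows/columns selected by~\eqref{eq:locate} lie in the span of the old ones \emph{after} applying $S^{\otimes L}$.

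The paper closes this gap by a direct computation in the $q$-coordinates~\eqref{eq:Psim}. Using only the property $S_{kj}=1$, one checks the three cases $j_n=j_{n+1}=k$, $j_n=k\neq j_{n+1}$, $j_n\neq k=j_{n+1}$ and finds something stronger than ``in the span'': the $k\!-\!1$ new columns of $\widehat{\text{Flat}}'_{A|B'}(P^+)$ are \emph{exact duplicates} of columns already present in $\widehat{\text{Flat}}'_{A|B}(P)$. That makes rank preservation immediate and sidesteps the bookkeeping you are worried about. I would replace your representation-theoretic sketch of the core step with this concrete three-case calculation; everything else in your outline then goes through as written.
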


\begin{proof}

Consider the molecular state space $\kappa\!=\!\{1,2,\ldots,k\}$ and a site pattern probability distribution $p_{i_1i_2\ldots i_L}$ on $L$ taxa.
Suppose this distribution arises under the general Markov model on the tree $\mathcal{T}$ and subsequently a time-instantaneous divergence event occurs causing, without loss of generality, a copy of the $L^\text{th}$ taxon to be created.
Under the usual assumptions of this model, this results in a new distribution $P^+=(p^+_{i_1i_2\ldots i_{L}i_{L+1}})_{i_j\in\kappa}$ on an $L\!+\!1$ taxon tree $\mathcal{T}^+$, with
\beqn
\label{eq:phylobranch}
p^+_{i_1i_2\ldots  i_{L}i_{L+1}}=
\left\{
\begin{array}{ll}
p_{i_1i_2\ldots  i_L}\text{ if }i_{L}\!=\!i_{L+1};\\
0,\text{ otherwise.}
\end{array}
\right.
\eqn

Consider a split $A|B$ and suppose taxon $L$ is contained in $B$.
Consider the new split $A|B'$ where  the new taxon $L\!+\!1$ has been adjoined to $B$ to produce $B'=B\cup \{L\!+\!1\}$. 
We will show that the sub-flattening $\widehat{\text{Flat}}'_{A|B'}(P^+)$ is obtained from $\widehat{\text{Flat}}'_{A|B}(P)$ by simply repeating $k\!-\!1$ columns.

Let $S$ be any $k\!\times\! k$ matrix consistent with the similarity transformation (\ref{eq:simtrans}).
In particular, this means that the $k^\text{th}$ row of $S$ is constant, and, without loss of generality, we will assume this is a row of $1s$, i.e. $S_{kj}\!=\!1$ for $j=1,2,\ldots, k$.
We denote the application of this similarity transformation to the site pattern distribution as
\beqn
\label{eq:Psim}
q_{i_1i_2\ldots i_L}:=\sum_{j_1,j_2\ldots,j_L\in \kappa}S_{i_1j_1}S_{i_2j_2}\ldots S_{i_Lj_L}p_{j_1j_2\ldots j_L}.
\eqn
We will refer to these quantities as the ``$q$-coordinates''.

Now suppose, without loss of generality, $A|B\!=\!\{1,2,\ldots,m\}|\{m+1,m+2,\ldots,L\}$, and write $q_{i_1i_2\ldots i_m,j_1j_2\ldots j_n}$ to emphasize the flattening corresponding to this split.
After locating the rows and columns which define the form (\ref{eq:subaction}),  the $(m(k\!-\!1)+1)\times (n(k\!-\!1)+1)$ entries of the sub-flattening are seen to be given by
\beqn
\label{eq:locate}
\widehat{\text{Flat}}'_{A|B}(P)
=
\left(
\begin{matrix}
\\
& q_{i_1i_2\ldots i_m,j_1j_2\ldots j_n} &
\\
\\
\end{matrix}
\right){\hspace{-.8em}\begin{smallmatrix}\vspace{3.2em}\\\hspace{.25em}i_a=k\text{ for all but at most one } a\in\{1,2,\ldots,m\};\\j_b=k\text{ for all but at most one } b\in\{1,2,\ldots,n\}.\end{smallmatrix}}
\eqn

We now consider the effect of the divergence rule (\ref{eq:phylobranch}) on the $q$-coordinates.
Again we suppose that the divergence event occurs on the $L^{\text{th}}$ taxon.
As a consequence of (\ref{eq:phylobranch}), a short computation shows that
\[
q^+_{i_1i_2\ldots i_m, j_{1}j_2\ldots j_{n-1}j_nj_{n+1} }
=\sum_{j,j'\in \kappa}S_{j_n j}S_{j_{n+1}j}S^{-1}_{jj'}q_{i_1i_2\ldots i_m,  j_1j_2\ldots j_{n-1}j'}.
\]

To construct $\widehat{\text{Flat}}'_{A|B}(P^+)$ we must consider three cases (recalling that we are assuming $S_{kj}\!=\!1$ for each $j\!=\!1,2,\ldots,k$):
\begin{enumerate}[(i.)]
\item Suppose $j_n\!=\!j_{n+1}\!=\! k$.
Then
\[
q^+_{i_1i_2\ldots  i_m,j_1j_2\ldots j_{n-1}kk }=q_{i_1i_2\ldots  i_m,j_1j_2\ldots j_{n-1}k}.
\]
\item Suppose $j_n\!=\!k$ and  $j_{n+1}\!\neq \! k$.
Then
\[
q^+_{i_1i_2\ldots  i_m,j_1j_2\ldots j_{n-1}kj_{n+1} }=q_{i_1i_2\ldots  i_m,j_1j_2\ldots j_{n-1}j_{n+1}}.
\]
\item Suppose $j_n\!\neq \!k$ and  $j_{n+1}\!=\! k$. Then
\[
q^+_{i_1i_2\ldots  i_m,j_1j_2\ldots j_{n-1}j_nk }=q_{i_1i_2\ldots  i_m,j_1j_2\ldots j_{n-1}j_{n}}.
\]
\end{enumerate}

In particular, for each choice $j=1,2,\ldots k\!-\!1$,
\[
q^+_{i_1i_2\ldots  i_m,\underbrace{\scriptstyle{kk\ldots kk}}_{n \text{ indices}}\hspace{-.17em}j }=q_{i_1i_2\ldots  i_m,\hspace{-.5em}\underbrace{\scriptstyle{kk\ldots k}}_{n\!-\!1 \text{ indices}}\hspace{-.7em}j}=q^+_{i_1i_2\ldots  i_m,\hspace{-.5em}\underbrace{\scriptstyle{kk\ldots k}}_{n\!-\!1 \text{ indices}}\hspace{-.75em}jk }.
\]
Comparing to the general form (\ref{eq:locate}), we see that the sub-flattening $\widehat{\text{Flat}}_{A|B'}'(P^+)$ is produced from the sub-flattening $\widehat{\text{Flat}}_{A|B}'(P)$ by simply repeating $k\!-\!1$ columns.
This observation holds more generally, independently of which taxon the divergence event occurs on.
The only modification needed is when the divergence happens on the left side of the split $A|B$, in which case the new sub-flattening is obtained from the old by a repetition of rows rather than columns. 
Thus, if we place a new taxon into the same side of the split as the taxon it diverged from, the rank of the sub-flattening is preserved.

We now apply Corollary~\ref{cor:trans} to conclude that, in the generic case, further application of (full rank) Markov matrices at the leaves of the phylogenetic tree $\mathcal{T}^+$ also does not affect the rank of the sub-flattening.

These observations establish the lemma.

\end{proof}


%
%


Now in order to determine the rank of an arbitrary sub-flattening, we may repeatedly apply Lemma~\ref{lem:indep} to reduce to the case where each $A|B$-consistent subtree is a single leaf.
Assuming this situation, each leaf is then either (i) not part of a cherry, or (ii) part of a cherry where the two leaves in the cherry lie on complementary sides of the split $A|B$.
A key feature of this situation is that we can label the descendants of every vertex (excluding the root) with complementary binary labels such that the leaf labels are consistent with the split $A|B$.
For our purposes, we then consider this reduced case as arising from a sequence of divergence events from the base two-taxa case where, after each divergence event at a leaf, the two descendants are placed into complementary sides of the target split $A|B$.
An example illustrating that this process is always possible given in Figure~\ref{fig:subclippings}.

\begin{figure}[htb]
\captionsetup{singlelinecheck=off}
\caption[]{
Given a tree $\mathcal{T}$ and a split $A|B$ on its leaf set, leaves belonging to $A$ are labelled by ``$+$'' and leaves in $B$ are labeled by ``$-$''.
The tree is reduced by removing any $A|B$-consistent subtrees, and  binary labels are attached to the vertices (excluding the root) such that the descendants of each vertex obtain complementary labels and the leaf labels are consistent with the split $A|B$.
In the case illustrated, the second step follows as a consequence of the two leaves that are not part of cherry.
  \begin{itemize}
    \item[\textbf{Step 1.}] Reduce each maximally $A|B$ consistent subtree to a leaf. 
    \item[\textbf{Step 2.}] Label each internal vertex (excluding the root) consistently so descendants of internal vertices are distinctly labelled.
    \item[\textbf{Step 3.}] Arbitrarily resolve any remaining ambiguities.
  \end{itemize}}
\begin{tabular}{lll}
    \pstree[levelsep=5ex,treesep=.5cm,nodesep=.8pt]{\TC*[radius=3pt] }{%
    
     \pstree{\TR{\textbf{?}}}{
                \skiplevels{2}
                \TR{$-$}
                \endskiplevels
                \pstree{\TR{$+$}}{
                \skiplevel{\TR{$+$}} 
                \pstree{\TR{$+$}}{
                \TR{$+$}
                \TR{$+$} 
                } 
                }
            }

     \pstree{\TR{\textbf{?}} }{       
     	\pstree{\TR{\textbf{?}}}{
                \pstree{\TR{\textbf{?}}}{
                \TR{$-$}
                \TR{$+$} 
                } 
                \skiplevel{\TR{$-$}} 
                }
     	\pstree{\TR{$-$}}{
        		\skiplevel{\TR{$-$}} 
                \pstree{\TR{$-$}}{
                \TR{$-$}
                \TR{$-$} 
                } 
                }
      }

       }

	\huge{       
       $
       \begin{matrix}
       \vspace{-4.5em}
       \longrightarrow
       \end{matrix}
       $
       }
       
       \normalsize
    
 \pstree[levelsep=5ex,treesep=1.8cm,nodesep=.4pt]{\TC*[radius=3pt] }{
    
    	\skiplevels{0}
    	\pstree[levelsep=5ex,treesep=1.4cm,nodesep=.4pt]{\TR{\textbf{?}}}{
        	\skiplevels{2}
            \TR{$-$}
            \endskiplevels
            \TR{$+$}
        }
    	\endskiplevels
    
    	\pstree[levelsep=5ex,treesep=1.4cm,nodesep=.4pt]{\TR{\textbf{?}}}{
        	\pstree[levelsep=5ex,treesep=.8cm,nodesep=.4pt]{\TR{\textbf{?}}}{
            	\pstree[levelsep=5ex,treesep=.8cm,nodesep=.4pt]{\TR{\textbf{?}}}{
            		\TR{$-$}
                	\TR{$+$}
            	}
                \skiplevel{\TR{$-$}}
             }
             
             \skiplevels{0}
            \TR{$-$}
            \endskiplevels
         }

    } 
    
\\
    
\huge{       
       $
       \begin{matrix}
       \vspace{-4.5em}
       \longrightarrow
       \end{matrix}
       $
       }
       
       \normalsize
    
    \pstree[levelsep=5ex,treesep=1.8cm,nodesep=.4pt]{\TC*[radius=3pt] }{
    
    	\skiplevels{0}
    	\pstree[levelsep=5ex,treesep=1.4cm,nodesep=.4pt]{\TR{\textbf{?}}}{
        	\skiplevels{2}
            \TR{$-$}
            \endskiplevels
            \TR{$+$}
        }
    	\endskiplevels
    
    	\pstree[levelsep=5ex,treesep=1.4cm,nodesep=.4pt]{\TR{\textbf{?}}}{
        	\pstree[levelsep=5ex,treesep=.8cm,nodesep=.4pt]{\TR{+}}{
            	\pstree[levelsep=5ex,treesep=.8cm,nodesep=.4pt]{\TR{$+$}}{
            		\TR{$-$}
                	\TR{$+$}
            	}
                \skiplevel{\TR{$-$}}
             }
             
             \skiplevels{0}
            \TR{$-$}
            \endskiplevels
         }

    } 
 
\huge{       
       $
       \begin{matrix}
       \vspace{-4.5em}
       \longrightarrow
       \end{matrix}
       $
       }
       
       \normalsize

\pstree[levelsep=5ex,treesep=1.8cm,nodesep=.4pt]{\TC*[radius=3pt] }{
    
    	\skiplevels{0}
    	\pstree[levelsep=5ex,treesep=1.4cm,nodesep=.4pt]{\TR{\tiny{$(+,-)$}}}{
        	\skiplevels{2}
            \TR{$-$}
            \endskiplevels
            \TR{$+$}
        }
    	\endskiplevels
    
    	\pstree[levelsep=5ex,treesep=1.4cm,nodesep=.4pt]{\TR{\tiny{$(-,+)$}}}{
        	\pstree[levelsep=5ex,treesep=.8cm,nodesep=.4pt]{\TR{$+$}}{
            	\pstree[levelsep=5ex,treesep=.8cm,nodesep=.4pt]{\TR{$+$}}{
            		\TR{$-$}
                	\TR{$+$}
            	}
                \skiplevel{\TR{$-$}}
             }
             
             \skiplevels{0}
            \TR{$-$}
            \endskiplevels
         }

    }

   \end{tabular}
    \label{fig:subclippings}
\end{figure}

We use this process to establish:
\begin{lem}
\label{lem:fullrank}
Suppose $\mathcal{T}$ is a tree, suppose $P$ is a distribution arising on $\mathcal{T}$ under the general Markov model, and suppose $A|B$ is a split such that the maximally $A|B$-consistent subtrees are all leaves. 
Then, in the generic case, the sub-flattening $\widehat{\text{Flat}}'_{A|B}(P)$ has maximal rank.
\end{lem}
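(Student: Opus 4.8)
The plan is to exploit the building-up procedure described just above the lemma: every split $A|B$ whose maximal $A|B$-consistent subtrees are all leaves can be realised as the endpoint of a chain of \emph{complementary divergence events}, starting from the two-taxon tree with its unique nontrivial split, where a complementary divergence duplicates a leaf and places the copy on the opposite side of the split. Since collapsing a monochromatic $A|B$-consistent subtree to a leaf affects neither the rank of the sub-flattening (Lemma~\ref{lem:indep}) nor the parsimony score $r$ (a monochromatic subtree contributes no state changes), it suffices to prove the lemma for the fully reduced tree, and for that it is enough to establish: (i) the two-taxon base case attains maximal rank; and (ii) a complementary divergence event raises the parsimony score by exactly $1$ and raises $\mathrm{rank}\,\widehat{\text{Flat}}'_{A|B}(P)$ by exactly $k-1$, generically. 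Iterating (ii) from (i) then gives $\mathrm{rank}\,\widehat{\text{Flat}}'_{A|B}(P)=r(k-1)+1$. The parsimony bookkeeping in (ii) is elementary: a complementary divergence creates a bichromatic cherry, which forces at least one new state change, and this is optimal because any minimal labelling of the smaller tree extends to the larger one.

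For the base case, when $L=2$ and the split is $\{1\}|\{2\}$ the sub-flattening is the whole transformed flattening $S\,\text{Flat}_{\{1\}|\{2\}}(P)\,S^{T}$, a $k\times k$ matrix; since $\text{Flat}_{\{1\}|\{2\}}(P)$ is a product of the full-rank leaf Markov matrices with the invertible diagonal matrix built from $\pi$, it has rank $k=1\cdot(k-1)+1$, which is the asserted bound. For the easy half of (ii), the index description~(\ref{eq:locate}) shows that enlarging the $A$-side of the split from $m$ to $m+1$ leaves appends exactly $k-1$ rows to the sub-flattening, and the $q$-coordinate computation in the proof of Lemma~\ref{lem:indep}, run in the case where the duplicated leaf's copy is placed on the \emph{opposite} side of the split, shows that $\widehat{\text{Flat}}'_{A|B}(P)$ reappears inside $\widehat{\text{Flat}}'_{A'|B}(P^{+})$ precisely as the sub-block of rows whose new distinguished index equals $k$. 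Hence $\mathrm{rank}\,\widehat{\text{Flat}}'_{A'|B}(P^{+})\le \mathrm{rank}\,\widehat{\text{Flat}}'_{A|B}(P)+(k-1)$; together with the base case this also re-derives the ``$\le r(k-1)+1$'' half of Theorem~\ref{thm1}.

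What remains, and is the heart of the matter, is the reverse inequality in (ii): for generic parameters the $k-1$ new rows of $\widehat{\text{Flat}}'_{A'|B}(P^{+})$ are linearly independent modulo the row space of $\widehat{\text{Flat}}'_{A|B}(P)$. Tracking the divergence rule~(\ref{eq:phylobranch}) through the $q$-coordinates, as in the proof of Lemma~\ref{lem:indep}, one finds that the $k-1$ new rows together with the distinguished ``all-$k$ on $A$'' row of $\widehat{\text{Flat}}'_{A|B}(P)$ are exactly the images under the invertible similarity matrix $S$ of $k$ explicit vectors $h_{1},\dots,h_{k}$ read off from the $A$-marginal of $P$ (equivalently, from $P$ with the new cherry contracted), with $\sum_{a}h_{a}$ equal to that distinguished row. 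It therefore suffices to show that, generically, $h_{1},\dots,h_{k}$ are linearly independent and that $\mathrm{span}\{h_{1},\dots,h_{k}\}$ meets the row space of $\widehat{\text{Flat}}'_{A|B}(P)$ only in the line $\openR\cdot\sum_{a}h_{a}$; and, because the rank of a matrix with polynomial entries is lower-semicontinuous in those parameters, this in turn reduces to exhibiting one generic parameter point at which this holds.

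I expect producing and checking that point to be the main obstacle. The tool I would use is Corollary~\ref{cor:trans}: the two edges incident to the newly created cherry vertex carry Markov matrices that are free parameters of $\mathcal{T}^{+}$, so we may transform $\widehat{\text{Flat}}'_{A'|B}(P^{+})$ by $\widehat{M}'_{A'}$ and $\widehat{M}'_{B}$ for a convenient generic choice without changing its rank, and the block upper-triangular shape of~(\ref{eq:subaction}) --- with the freedom residing in the final diagonal block and final column attached to the new leaf --- lets us push the new rows into general position relative to the fixed, full-rank rows inherited from $\widehat{\text{Flat}}'_{A|B}(P)$. Making ``general position'' precise comes down either to a direct nonvanishing check, at a cleverly chosen point, for an $\bigl(r(k-1)+1+(k-1)\bigr)$-square minor, or --- the route I would try first --- to identifying the $h_{a}$ as coordinate slices of a sub-flattening of the $A$-marginal on the contracted tree and invoking the inductive hypothesis for the induced (smaller) split there to supply the needed full rank. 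Assembling these pieces yields $\mathrm{rank}\,\widehat{\text{Flat}}'_{A'|B}(P^{+})=\mathrm{rank}\,\widehat{\text{Flat}}'_{A|B}(P)+(k-1)$ for generic parameters, which closes the induction and proves the lemma.
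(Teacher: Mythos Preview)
Your overall architecture matches the paper's: induction from the two-taxon base case via complementary divergence events, with the same verification of the base case and the same observation that enlarging $A$ by one leaf adds exactly $k-1$ rows. The divergence is at what you correctly identify as ``the heart of the matter'': showing that those $k-1$ new rows are linearly independent of the old row space.

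The paper dispatches this step by a device you do not mention. Writing out the new $q$-coordinates as you do, one finds that in the column with $j_1\neq k$ and $j_2=\cdots=j_n=k$, the entry of the new row indexed by $i_{m+1}\in\{1,\dots,k-1\}$ equals $q_{k\cdots k,\,j_1k\cdots k\,i_{m+1}}$, a $q$-coordinate of the \emph{original} $m+n$ taxon tree that has two non-$k$ indices on the $B$ side and therefore appears in no other row of the sub-flattening. Consequently, any linear dependence among the rows would be a nontrivial linear relation among $q$-coordinates valid for all generic parameter choices, i.e.\ a linear phylogenetic invariant for the general Markov model. The paper then cites Hagedorn's result that no such invariants exist, and is done.

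By contrast, your proposed routes are left as programmes rather than arguments, and the first one is shakier than it looks. Invoking Corollary~\ref{cor:trans} with generic Markov matrices on the new cherry edges does not obviously help: in the block form~(\ref{eq:subaction}) for $\widehat{M}'_{A'}$, the freedom attached to the new leaf sits in the last diagonal block and last column, so the transformation acts on the new rows and the distinguished all-$k$ row but does not move the old rows at all; you still have to prove that the new rows escape the old row space, which is precisely the original problem. Your second and third suggestions (a direct minor check, or recasting the $h_a$ as slices of a marginal sub-flattening) may be workable, but you have not carried them out, and neither is as clean as the ``no linear phylogenetic invariants'' observation.
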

\begin{proof}
Suppose such a reduced tree has $q$-coordinates $q_{i_1i_2\ldots i_m,j_1j_2\ldots j_n}$, and the $n^{\text{th}}$ taxon in $B$ diverges creating a new taxon which is adjoined to $A$ to form the new split $A'|B$.
Analogous to the previous situation, we have the new $q$-coordinates
\[
q^+_{i_1i_2\ldots i_mi_{m+1},j_1j_2\ldots j_n}=\sum_{j,j'=1}^kS_{i_{m+1}j}S_{j_{n}j}S_{jj'}^{-1}q_{i_1i_2\ldots i_{m},j_1j_2\ldots j_{n-1}j'}.
\]

From this we see that the additional $k\!-\!1$ rows in the sub-flattening $\widehat{\text{Flat}}'_{A'|B}(P^+)$ are obtained by setting $i_1\!=\!i_2\!=\ldots =\!i_m\!=k$, and taking $i_{m+1}\!=\!1,2,\ldots,k\!-\!1$ in 
\[
q^+_{kk\ldots ki_{m+1},j_1j_2\ldots j_n}=\sum_{j,j'=1}^kS_{i_{m+1}j}S_{j_{n}j}S_{jj'}^{-1}q_{kk\ldots k,j_1j_2\ldots j_{n-1}j'},
\]
where the columns are indexed by choosing $b\in \{1,2,\ldots,n\}$ so that at most a single $j_b\neq k$ at a time.
In particular, if we choose $j_1\neq k$ and $j_2\!=\!j_3\!=\!\ldots \!=\!j_n=k$ we have
\[
q^+_{kk\ldots ki_{m+1},j_1kk\ldots k}=q_{kk\ldots k,j_1kk\ldots ki_{m+1}}.
\]
Now for each choice $i_{m+1}\!=\!1,2,\ldots, k\!-\!1$ this expression gives $q$-coordinates which \emph{do not} appear in the sub-flattening $\widehat{\text{Flat}}_{A|B}'(P)$ or any of the other rows of $\widehat{\text{Flat}}'_{A'|B}(P^+)$.
It follows that any linear dependencies between the new and remaining rows in $\widehat{\text{Flat}}'_{A'|B}(P^+)$ would imply linear constraints on the $q$-coordinates on the original $m+n$ taxon tree.
In turn, this would imply the existence of linear phylogenetic invariants for the general Markov model, which are known not to exist \cite{Hagedorn00acombinatorial}.
Therefore, the new rows appearing in $\widehat{\text{Flat}}'_{A'|B}(P^+)$ are linearly independent from the rest.

To complete the proof, we use induction on the base case of a two-taxon tree.
To establish this base case, we show that, in the generic case, the two-taxon sub-flattening on the split $A|B=\{1\}|\{2\}$ has full rank $k\!=\!(k\!-\!1)\!+\!1$.
This follows easily since, in the two-taxon case, the sub-flattening is equal to the transformed flattening, that is
\[
\widehat{\text{Flat}}'_{\{1\}|\{2\}}(P)=\text{Flat}'_{\{1\}|\{2\}}(P)=S\text{Flat}(P)_{\{1\}|\{2\}}S^{-1}.
\]
Thus the sub-flattening is related by the similarity transformation $S$ to the flattening $\text{Flat}(P)_{\{1\}|\{2\}}$, which a standard argument shows can be expressed as
\[
\text{Flat}_{\{1\}|\{2\}}(P)=M_1D(\pi)M_2^T,
\]
where $D(\pi)$ is the diagonal matrix formed from the root distribution $\pi=(\pi_i)_{i\in \kappa}$.
Clearly this matrix is full rank if $M_1$ and $M_2$ are full rank and $\pi$ has no zero entries. 
Thus, in the generic case, the two-taxon sub-flattening $\widehat{\text{Flat}}'_{1|2}(P)$ has full rank.

Induction on this base case establishes the lemma.

\end{proof}

With these results in hand, Theorem~\ref{thm1} follows for arbitrary trees and splits by the following three steps:
\begin{itemize}
\item[(i)] Apply Lemma~\ref{lem:indep} and clip off any $A|B$-consistent subtrees;
\item[(ii)] Apply Lemma~\ref{lem:fullrank}; and
\item[(iii)] Use Fitch's algorithm \cite{felsenstein2004inferring,fitch1971toward} to recognise that the minimum of the number of maximally $A$- and $B$-consistent subtrees is none other than the parsimony score for the split $A|B$ considered as a binary character at the leaves of $\mathcal{T}$.
\end{itemize}

\end{document}